\pdfoutput=1

\documentclass[letterpaper, 10 pt, conference]{ieeeconf}  

\IEEEoverridecommandlockouts                              
\usepackage{amsmath} 
\usepackage{amssymb}  
\usepackage{graphicx}
\usepackage{subcaption}
\usepackage{eso-pic}
\AddToShipoutPictureFG*{%
  \AtPageUpperLeft{%
    \put(\LenToUnit{0.75in},\LenToUnit{-0.45in}){%
      \parbox[t]{\dimexpr\paperwidth-1.5in\relax}{\centering\scriptsize
        \copyright\ 2026 IEEE. Personal use of this material is permitted. Permission from IEEE must be obtained for all other uses, in any current or future media, including reprinting/republishing this material for advertising or promotional purposes, creating new collective works, for resale or redistribution to servers or lists, or reuse of any copyrighted component of this work in other works.}}}}

\newtheorem{definition}{Definition}
\makeatletter
\def\@opargbegintheorem#1#2#3{\trivlist
   \item[\hskip \labelsep{\itshape{#1}\ \itshape{#2}}]{(\itshape{#3}):}\ }
\makeatother

\newtheorem{assumption}{Assumption}

\newtheorem{theorem}{Theorem}

\title{\LARGE \bf
Strategic Arbitrage in Electricity Markets: A Multi-Agent Policy Optimization Framework
}

\author{Eli Brock, Srikar Babu Gadipudi, Somayeh Sojoudi, and Javad Lavaei
\thanks{The authors are with the University of California, Berkeley.}%
\thanks{This work used Jetstream2 at Indiana University through allocations ELE250009 and CIS251274 from the Advanced Cyberinfrastructure Coordination Ecosystem: Services \& Support (ACCESS) program, which is supported by U.S. National Science Foundation grants \#2138259, \#2138286, \#2138307, \#2137603, and \#2138296 \cite{boerner2023access}.}%
\thanks{This work was supported by the U. S. Army Research Laboratory and the U. S. Army Research Office under Grant W911NF2010219, Office of Naval Research, and NSF.}
}

\begin{document}

\maketitle
\thispagestyle{empty}
\pagestyle{empty}
\begin{abstract}

Grid-scale storage and load aggregators are growing quickly in wholesale electricity markets. If highly consolidated, these new participants may distort the competitive market equilibrium through strategic withholding behavior. Traditional market power mitigation techniques are ill-suited to arbitrage-based participants. We provide a multi-agent policy optimization model to study game-theoretic equilibria for real-time markets with consolidated storage and/or aggregators. The model is designed to facilitate natural first-order equilibrium-seeking update rules. We analyze the theoretical properties of different supply functions and their implications on the convergence of the gradient dynamics. Finally, we train equilibrium policies for a modified version of the IEEE RTS-GMLC benchmark across different levels of storage fleet consolidation, to demonstrate the potential of the framework as a tool to characterize and quantify strategic arbitrage.

\end{abstract}

\section{INTRODUCTION} \label{sec:intro}

Modern electricity markets with large penetrations of variable renewable energy (VRE) typically also feature significant grid-scale storage. Like large conventional generator fleets, large storage operators might exercise \textit{market power} by withholding some arbitrage potential to manipulate prices. However, the cost of storing energy at a given time is characterized by future arbitrage potential, rather than publicly known fuel price, so Independent System Operators (ISOs) lack a clear falsifiable mechanism for recognizing withholding behavior in storage units as they do for conventional generators \cite{anunrojwong2024battery}. \cite{ma2025comparative} concluded that storage operators in California may already be acting as price-makers. A similar problem might soon apply to load aggregators, recently permitted to participate in wholesale markets by FERC Order 2222. A small number of sizable aggregators may prove even more difficult to regulate than consolidated grid-scale storage due to the operational complexity of coordinating millions of devices on the edge of the grid.

A recent stream of literature has addressed strategic withholding by grid-scale storage operators. \cite{anunrojwong2024battery} provides a rigorous closed-form characterization of withholding behavior on a stylized model. \cite{wu2024market} develops a more realistic framework to simulate and identify withholding behaviors. \cite{qi2025locational} proposes a method for a system operator to set bounds on energy storage bids to regulate market power. \cite{contreras2017participation} models a price-making load aggregator. Our work is distinct from existing literature in that we model the real-time market as a stochastic game \cite{shapley1953stochastic} where operators learn closed-loop policies as opposed to open-loop bid schedules over some fixed horizon. This Markovian game-theoretic framework is a more accurate representation of the decision problem faced by storage operators, allows for a high-fidelity model of the market-clearing process, and facilitates a first-class treatment of the uncertainty inherent in power grids. To our knowledge, this is the first work to frame arbitrage-based withholding in electricity markets as a multi-agent policy optimization game.

Section \ref{sec:game-theory} introduces the necessary game-theoretic concepts. Section \ref{sec:formulation} formulates an electricity market with self-dispatching strategic agents as a stochastic game, and Section \ref{sec:analysis} analyzes two smooth instances of the general game. Section \ref{sec:case-study} conducts a detailed experiment on the IEEE RTS-GMLC benchmark case \cite{barrows2019ieee}: stochastic gradient play converges reliably to game-theoretic equilibria, and these equilibria exhibit the expected market-power effect, with consolidation raising storage revenue at the expense of social welfare. The case study indicates that the proposed framework is a promising tool to characterize the economic implications of large, consolidated grid-scale storage operators and load aggregators.

\subsubsection*{Notation}
$\text{LSE}_\zeta(x) = \zeta\log\left(\sum_i\exp\left(\frac{1}{\zeta}x_i\right)\right)$ is the LogSumExp approximation of the maximum operator with smoothing parameter $\zeta > 0$. $\text{LSE}_\zeta^0(x)\equiv\text{LSE}_\zeta(0,x)$. $\mathcal{P}(\bullet)$ is the set of Borel distributions over the set $\bullet$. $N(x,\sigma^2)$ is the normal distribution with mean $x$ and variance $\sigma^2$. $2^\bullet$ denotes the power set of the set $\bullet$. $\nabla_\bullet$ and $\frac{\partial}{\partial \bullet}$ denote the gradient and Jacobian with respect to $\bullet$, respectively. $\Re$ is the real-part operator and $\text{spec}$ is the spectrum operator.

\section{GAME THEORY PRELIMINARIES} \label{sec:game-theory}

\subsection{Continuous games} \label{sec:continuous-games}
An unconstrained continuous game with set of players $\mathcal{N}$ is specified by a tuple of \textit{payoff functions}, $(J_i)_{i\in\mathcal{N}}$ where $J_i: \mathbb{R}^{n_i}\rightarrow\mathbb{R}$ for all $i\in\mathcal{N}$ and $n_i$ is the dimension of player $i$'s strategy space. A strategy profile $\theta$ is a \textit{local Nash equilibrium} if no player can improve their payoff by a local deviation; see \cite{mazumdar2020gradient} for a precise definition. Smooth continuous games, where each $J_i$ is smooth, admit an equivalent first and second-order characterization:
\begin{definition}[Differential Nash Equilibrium \cite{mazumdar2020gradient}] \label{def:differential_equilibrium}
    A policy profile $\theta$ is called a \textit{differential Nash equilibrium} of a smooth continuous game $(J_i)_{i\in\mathcal{N}}$ if $\nabla_{\theta_i}J_i(\theta) = 0$ and $\nabla_{\theta_i}^2J_i(\theta)\prec0$ for all $i\in\mathcal{N}$.
\end{definition}

Standard equilibrium-seeking algorithms for smooth continuous games feature update rules designed to mimic the \textit{gradient flow}
\begin{equation} \label{eq:gradient-flow}
    \dot{\theta}_i = \Gamma_i(\theta) \equiv \nabla_{\theta_i} J_i(\theta) \quad \forall i \in \mathcal{N}.
\end{equation}
\begin{definition}[Locally Asymptotically Stable Equilibrium \cite{mazumdar2020gradient}] \label{def:lase}
    A policy profile $\theta$ is called a \textit{locally asymptotically stable equilibrium} (LASE) of the smooth continuous game $(J_i)_{i\in\mathcal{N}}$ if $\nabla_{\theta_i}J_i(\theta) = 0$ for all $i\in\mathcal{N}$ and the \textit{game Jacobian} $\frac{\partial}{\partial\theta}\Gamma(\theta)$ of the vector field in \eqref{eq:gradient-flow} is Hurwitz, that is, $\Re(\lambda) < 0$ for all $\lambda\in\text{spec}\left(\frac{\partial}{\partial\theta}\Gamma(\theta)\right)$.
\end{definition}

Every Nash equilibrium is either an LASE or a strict saddle point \cite{mazumdar2020gradient}. For general-sum games, however, \eqref{eq:gradient-flow} is a heuristic: its dynamics may exhibit limit cycles or divergence \cite{fudenberg1991game}, and LASEs that are not Nash equilibria may exist \cite{mazumdar2020gradient}. Convergence guarantees are restored for the class of \textit{potential games}:
\begin{definition}[Potential Game \cite{monderer1996potential}] \label{def:potential_game}
    A smooth continuous game is called a \textit{potential game} if there exists a potential function $\Phi$ such that
    \begin{equation*}
        \nabla_{\theta_i}\Phi(\theta) = \nabla_{\theta_i}J_i(\theta)
    \end{equation*}
    for all $i \in \mathcal{N}$.
\end{definition}

For potential games, \eqref{eq:gradient-flow} reduces to cooperative gradient ascent on $\Phi$. Every stationary point is therefore a Nash equilibrium, and \eqref{eq:gradient-flow} converges under mild assumptions \cite{swenson2018best}.

\subsection{Stochastic games} \label{sec:stochastic-games}
A stochastic game \cite{shapley1953stochastic} is a tuple $S = (\mathcal{S}, (\mathcal{A}_i)_{i \in \mathcal{N}}, (R_i)_{i \in \mathcal{N}}, f, P_\omega, P_0, \gamma)$ where $\mathcal{N}$ is a finite set of agents, $\mathcal{S}$ is the state space, $\mathcal{A}_i: \mathcal{S} \to 2^{\mathcal{A}_i^\cup}$ maps each state to agent $i$'s feasible action set (where $\mathcal{A}_i^\cup=\bigcup_{s\in\mathcal{S}}\mathcal{A}_i(s)$), $R_i: \mathcal{K} \to \mathbb{R}$ is agent $i$'s reward function with $\mathcal{K} \equiv \{(s,a): s \in \mathcal{S},\, a_i \in \mathcal{A}_i(s)\quad\forall i\in\mathcal{N}\}$, $f: \mathcal{K} \times \Omega \to \mathcal{S}$ is the transition function, $P_0\in\mathcal{P}(\mathcal{S})$ and $P_\omega\in\mathcal{P}(\Omega)$ are the initial state and noise densities, and $\gamma \in [0,1)$ is the discount factor.

Each agent selects actions via a stationary parameterized policy $\mu_i: \mathcal{S} \times \mathbb{R}^{n_i} \to \mathcal{A}_i^\cup$ satisfying $\mu_i(s|\theta_i) \in \mathcal{A}_i(s)$ for all $s \in \mathcal{S}, \theta_i\in\mathbb{R}^{n_i}$. A stochastic game $S$ together with parameterizations $\Pi = (\mu_i)_{i \in \mathcal{N}}$ induces a continuous game with payoff functions
\begin{equation} \label{eq:expectea_return}
    J_i(\theta) \equiv \mathbb{E}\left[\sum_{t=0}^\infty\gamma^tR_i(s_t,a_t)\right],
\end{equation}
where $s_0 \sim P_0$, $\omega_t \sim P_\omega$, $a_{i,t} = \mu_i(s_t|\theta_i)$, and $s_{t+1} = f(s_t, a_t, \omega_t)$. We call $(S, \Pi)$ a \emph{parameterized stochastic game} (PSG). Practical potential PSGs typically belong to the class of \textit{Markov potential games}:
\begin{definition}[Markov Potential Game \cite{leonardos2021global}] \label{def:mpg}
    A smooth PSG $(S, \Pi)$ is a \emph{Markov potential game} (MPG) if there exists a function $\phi:\mathcal{K}\to\mathbb{R}$ such that $(S, \Pi)$ is a potential game with $\Phi(\theta) = \mathbb{E}\left[\sum_{t=0}^{\infty} \gamma^t \phi(s_t, a_t)\right]$.
\end{definition}

\subsection{Learning procedure: stochastic gradient play}
In the stochastic game setting, the returns $J_i(\theta)$ and their gradients cannot generally be computed in closed form. Given oracle access to $(S, \Pi)$, one can estimate returns by sampling an initial state from $P_0$ and a noise trajectory i.i.d.\ from $P_\omega$:
\begin{equation*}
    \hat{J}_{i,T}\left(\theta, s_0, \omega_{0:T-1}\right) = \sum_{t=0}^T R_i(s_t, a_t),
\end{equation*}
where the state and action trajectories follow from the parameterized dynamics. We say a PSG is \textit{per-sample smooth} if $\hat{J}_{i,T}(\bullet,s_0,\omega_{0:T-1})$ is smooth for all $s_0\in\mathcal{S}$ and $\omega_{0:T-1}\in\Omega^T$. Per-sample smooth PSGs therefore allow for \textit{stochastic gradient play} (SGP), the core learning procedure of this work, which approximates \eqref{eq:gradient-flow} by updates of the form
\begin{equation} \label{eq:sgp} \tag{SGP}
    \theta_i^{(k+1)} = \theta_i^{(k)} + \frac{\eta^{(k)}}{B}\sum_{j=1}^B\nabla_{\theta_i}\hat{J}_{i,T^{(j)}}(\theta^{(k)}, s_0^{(j)}, \omega^{(j)}),
\end{equation}
where $B\in\mathbb{Z}_{++}$ is a batch size and $s_0^{(j)}$, $\omega^{(j)}$, and $T^{(j)}$ are sampled i.i.d. from the initial state density, the noise density, and a geometric distribution with success probability $\gamma$, respectively, yielding an unbiased estimator. Since we seek only to compute equilibria rather than to simulate learning (the ``computational'' agenda of \cite{shoham2007if}), we compute \eqref{eq:sgp} directly by backpropagating through rollouts \cite{heess2015learning}. This approach allows for simpler, lower-variance gradient estimation than model-free policy gradient methods and is compatible with deterministic policies. 

\section{FORMULATION} \label{sec:formulation}
We propose a PSG model of a real-time wholesale electricity market with a fleet of conventional generators and a number of flexible devices, such as grid-scale storage units or consumer-owned distributed energy resources. Management of the devices is divided between a small number of agents, each of whom may have an appreciable influence on the market price.

We assume the agents participate via \textit{self-schedule} bids, meaning that they inform the system operator of their level of consumption or production, as opposed to \textit{economic bids}, which consist of quantity/price pairs. We also neglect the strategic implications of the agents' forward contracts, such as power purchase agreements or a position in a voluntary day-ahead market. For simplicity, we assume the electric transmission network connecting the agents is unconstrained. Since ISOs typically clear markets using DC approximate power flow equations that neglect losses, an unconstrained network is equivalent to a pooled single-node market. This assumption precludes locational market power arising from congestion; the equilibria studied here therefore characterize system-wide rather than locational withholding.

We reuse the notation of Section \ref{sec:stochastic-games} for the corresponding objects of the specific game developed here. We also construct closely related objects with restricted function signatures, marked with a tilde $\tilde{\bullet}$; the two are related formally in \eqref{eq:psg-construction}--\eqref{eq:policies}.

As in Section \ref{sec:game-theory}, $\mathcal{N}$ denotes the set of self-dispatching strategic agents. Each agent $i\in\mathcal{N}$ manages a finite set of devices $\mathcal{D}_i$, where $\mathcal{D}_i\cap\mathcal{D}_j = \emptyset$ for all $i\neq j$ without loss of generality. Denote the set of all devices $\mathcal{D}\equiv\bigcup_{i\in\mathcal{N}}\mathcal{D}_i$. Each device $i\in\mathcal{D}$ has a private state space $\mathcal{S}_i$, which may represent, for example, the state-of-charge range of a battery storage unit or the feasible indoor air temperatures of an HVAC system managed by a load aggregator. $\mathcal{S}_0$ is the public market state space capturing any exogenous quantities relevant to the grid, including seasonality, weather, or abstractions thereof. The market state evolves according to an action-independent transition function $\tilde{f}_0:\mathcal{S}_0\times\Omega \to \mathcal{S}_0$.

For each device $i\in\mathcal{D}$, let $\tilde{\mathcal{A}}_i:\mathcal{S}_i\to2^{\tilde{\mathcal{A}}_i^\cup}$ denote its action space mapping, where $\tilde{\mathcal{A}}_i^\cup=\bigcup_{s\in\mathcal{S}_i}\tilde{\mathcal{A}}_i(s)$ and $\tilde{\mathcal{A}}^\cup\equiv\prod_{i\in\mathcal{D}}\tilde{\mathcal{A}}_i^\cup$. Device $i$'s action is its scalar energy consumption, that is $\tilde{\mathcal{A}}_i^\cup\subset\mathbb{R}$. Each device $i\in\mathcal{D}$ has an associated policy $\tilde{\mu}_i:\mathcal{S}_0\times\mathcal{S}_i\times\mathbb{R}^{n_i}\to\tilde{\mathcal{A}}_i^\cup$
\begin{equation*}
    \tilde{\mu}_i\left(s_0,s_i|\theta_i\right)\in\tilde{\mathcal{A}}_i(s_i) \quad
    \forall s_0 \in \mathcal{S}_0,
    s_i \in \mathcal{S}_i,
    \theta_i \in \mathbb{R}^{n_i}.
\end{equation*}
Device $i\in\mathcal{D}$'s state evolves according to a dynamics function $\tilde{f}_i:\mathcal{S}_0\times\mathcal{K}_i \times \Omega \to \mathcal{S}_i$, where $\mathcal{K}_i \equiv \{(x,d):x\in\mathcal{S}_i,d\in\tilde{\mathcal{A}}_i(x)\}$. Notice that both $\tilde{\mu}_i$ and $\tilde{f}_i$ are conditioned only on the public market state and device $i$'s state; neither depends on the private states of other agents.

The agents are compensated for their net production (equivalently, charged for their net consumption) according to the market-clearing price, resulting in payoff function $\tilde{R}_i:\mathcal{S}_0\times\tilde{\mathcal{A}}^\cup\to\mathbb{R}$ for each agent $i\in\mathcal{N}$ given by 
\begin{equation} \label{eq:payout}
    \tilde{R}_i\left(s_0,a\right) = -\underbrace{C'}_{\substack{\text{price} \\ \text{function}}}\left(\underbrace{m(s_0)}_{\substack{\text{baseline} \\ \text{demand}}} + \sum_{j\in\mathcal{D}}a_j\right)\sum_{j\in\mathcal{D}_i} a_j.
\end{equation}
$m:\mathcal{S}_0\to\mathbb{R}$ maps the market state to the current \textit{baseline demand}, or the inelastic component of demand purchased by load-serving entities. $C:\mathbb{R}\to\mathbb{R}$ maps a power demand to the total cost incurred by the system to meet that demand over a single unit of time. As in typical wholesale market-clearing processes, the system price is determined by the \textit{marginal cost}, that is, the price function is the derivative of the system cost function. Notice that $C$ is stationary, implicitly assuming that any VRE generation is modeled as a negative contribution to the net demand argument, rather than as a time-varying part of the expression of $C$.

Per Section \ref{sec:game-theory}, we can now formally construct a stochastic game with
\begin{subequations} \label{eq:psg-construction}
\begin{align}
    \mathcal{S} & = \mathcal{S}_0 \times \prod_{i\in\mathcal{D}}\mathcal{S}_i \\
    \mathcal{A}_i(s) & = \prod_{j\in\mathcal{D}_i}\tilde{\mathcal{A}}_j(s_j) \quad \forall i\in\mathcal{N} \\
    R_i(s, a) & = \tilde{R}_i\left(s_0,a\right) \quad \forall i\in\mathcal{N} \\
    f_0\left(s, a, \omega\right) & = \tilde{f}_0(s_0,\omega) \label{eq:public-transition-function} \\
    f_i\left(s, a, \omega\right) & = \tilde{f}_i\left(s_0,s_i,a_i,\omega\right)  \quad \forall i\in\mathcal{D}, \label{eq:private-transition-function}
\end{align}
\end{subequations}

where $f_0$ and $f_i$ are the components of $f$ that map to the public market state and device $i$'s private state, respectively. We can also construct the policy parameterization $\Pi=(\mu_i)_{i\in\mathcal{N}}$ as
\begin{equation}
    \mu_{i,j}\left(s | \theta_{\mathcal{D}_i}\right) = \tilde{\mu}_j\left(s_0,s_j|\theta_j\right) \quad \forall j \in \mathcal{D}_i, \label{eq:policies}
\end{equation}
where $\mu_{i,j}$ is the component of agent $i$'s policy that maps to device $j\in\mathcal{D}_i$'s action.

Recall from Section \ref{sec:game-theory} that $(S,\Pi)$ is the PSG induced by $S$ and $\Pi$. Under Assumption \ref{assumption:smoothness}, one can easily verify that $(S,\Pi)$ is per-sample smooth.
\begin{assumption} \label{assumption:smoothness}
    $C$, $m$, $\tilde{f}_0$, as well as $\tilde{f}_i$ and $\tilde{\mu}_i$ for all $i\in\mathcal{D}$, are smooth functions.
\end{assumption}

\section{ANALYSIS} \label{sec:analysis}
\subsection{Potential game structure}
Theorem \ref{thm:potential-structure} establishes that determining whether $(S,\Pi)$ is a potential game reduces to determining whether the stagewise payoffs are potential in the agent demands $a$. Theorem \ref{thm:potential-structure} specializes \cite[Theorem 1]{brock2025coordinating} to the smooth PSG paradigm adopted in this work to facilitate a straightforward, directly verifiable proof.
\begin{theorem} \label{thm:potential-structure}
    If there exists a function $\tilde{\phi}:\mathcal{S}_0\times\tilde{\mathcal{A}}^\cup\to\mathbb{R}$ such that
    \begin{equation} \label{eq:general-stagewise-potential-function}
        \frac{\partial}{\partial a_j}\tilde{\phi}\left(s_0,a\right) = \frac{\partial}{\partial a_j}\tilde{R}_i\left(s_0,a\right)
    \end{equation}
    for all $i\in\mathcal{N}$ and $j\in\mathcal{D}_i$, then $(S,\Pi)$ is an MPG with
    \begin{equation} \label{eq:stagewise-potential}
        \phi\left(s, a\right) = \tilde{\phi}\left(s_0,a\right).
    \end{equation}
\end{theorem}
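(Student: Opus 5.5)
We verify Definition \ref{def:mpg} directly: the task is to show $\nabla_{\theta_{\mathcal{D}_i}}\Phi(\theta)=\nabla_{\theta_{\mathcal{D}_i}}J_i(\theta)$ for every agent $i\in\mathcal{N}$, where $\Phi(\theta)=\mathbb{E}[\sum_t\gamma^t\tilde{\phi}(s_{0,t},a_t)]$. The key is the decoupled structure of the dynamics \eqref{eq:public-transition-function}--\eqref{eq:private-transition-function} and of the policies \eqref{eq:policies}. The market state $s_{0,t}$ does not depend on $\theta$ at all. For each device $j$, the trajectory $(s_{j,t},a_{j,t})_t$ depends only on $\theta_j$, given the realizations of $s_0$ and $\omega$. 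Hence, writing $\theta_{\mathcal{D}_i}$ for agent $i$'s parameters, the demands $a_{k,t}$ for $k\notin\mathcal{D}_i$ are constant in $\theta_{\mathcal{D}_i}$ along any fixed sample path.

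The plan works pathwise on truncated sums and passes to expectations at the end. Fix $s_0$, a noise sequence $\omega_{0:T-1}$ and a horizon $T$, and let $\hat{\Phi}_T=\sum_{t\le T}\gamma^t\tilde{\phi}(s_{0,t},a_t)$, with $\hat J_{i,T}$ defined analogously using $\tilde R_i$. By the chain rule,
\begin{equation*}
\nabla_{\theta_{\mathcal{D}_i}}\hat{\Phi}_T=\sum_{t\le T}\gamma^t\sum_{k\in\mathcal{D}}\frac{\partial \tilde\phi}{\partial a_k}(s_{0,t},a_t)\,\nabla_{\theta_{\mathcal{D}_i}}a_{k,t}.
\end{equation*}
Only the terms with $k\in\mathcal{D}_i$ survive, by the independence just noted. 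By hypothesis \eqref{eq:general-stagewise-potential-function}, on these terms $\partial\tilde\phi/\partial a_k=\partial\tilde R_i/\partial a_k$. Running the same chain-rule expansion on $\hat J_{i,T}$ also leaves only the $k\in\mathcal{D}_i$ terms, because $a_{k,t}$ for $k\notin\mathcal{D}_i$ has zero derivative there as well. Note that the hypothesis is needed only for $j\in\mathcal{D}_i$, and that is exactly what the chain rule requires. It follows that $\nabla_{\theta_{\mathcal{D}_i}}\hat\Phi_T=\nabla_{\theta_{\mathcal{D}_i}}\hat J_{i,T}$ pathwise. Consequently $\hat\Phi_T-\hat J_{i,T}$ is, for every sample, a function independent of $\theta_{\mathcal{D}_i}$.

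The main obstacle is the passage from the per-sample identity to the infinite-horizon expectations, since we must justify exchanging the gradient with the expectation and the infinite sum. Rather than differentiating under the integral, I would first show that $\Phi(\theta)-J_i(\theta)$ does not depend on $\theta_{\mathcal{D}_i}$ at all. Pathwise, $\sum_t\gamma^t(\tilde\phi-\tilde R_i)$ is a function of $s_{0,\cdot}$ and $(a_{k,\cdot})_{k\notin\mathcal{D}_i}$ only. This holds because the pathwise gradient identity holds along every segment joining two values of $\theta_{\mathcal{D}_i}$ (integrate the derivative along the segment). Taking expectations then gives that $\Phi-J_i$ is constant in $\theta_{\mathcal{D}_i}$, so its gradient in $\theta_{\mathcal{D}_i}$ vanishes.

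This route needs only integrability, for example bounded rewards, compact action sets or discounting with bounded stage functions, so that both expectations are finite. That is implicit in $J_i$ being well defined and smooth as a payoff of a smooth game. Together with the smoothness from Assumption \ref{assumption:smoothness}, this yields $\nabla_{\theta_{\mathcal{D}_i}}\Phi=\nabla_{\theta_{\mathcal{D}_i}}J_i$, and hence an MPG with $\phi(s,a)=\tilde\phi(s_0,a)$, which is a function on $\mathcal{K}$ as required.
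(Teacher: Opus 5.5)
Your proposal is correct, and its core is the paper's argument: apply the chain rule along a fixed sample path, note that $s_{0,t}$ and $a_{k,t}$ for $k\notin\mathcal{D}_i$ do not depend on the parameters being differentiated, and substitute the hypothesis on the surviving terms. The paper does this device by device for $\theta_j$, $j\in\mathcal{D}_i$, and then stacks the results; you treat the block $\theta_{\mathcal{D}_i}$ at once.

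The genuine difference is how you pass to the infinite-horizon expectation. The paper simply writes $\nabla_{\theta_j}J_i(\theta)=\mathbb{E}[\sum_t\gamma^t\nabla_{\theta_j}\tilde R_i(s_{0,t},a_t)]$, and the analogous identity for $\Phi$. That interchanges the gradient with the infinite sum and the expectation without comment, which tacitly requires a domination bound on the pathwise derivatives.

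You instead convert the pathwise gradient identity into the statement that $\sum_t\gamma^t(\tilde\phi-\tilde R_i)$ is independent of $\theta_{\mathcal{D}_i}$ sample by sample. This is legitimate because $\theta_{\mathcal{D}_i}$ ranges over a convex set, and working in parameter space also sidesteps any connectedness question about the action sets. Taking expectations then gives $\Phi=J_i+G_i(\theta_{-\mathcal{D}_i})$. Your route therefore needs only finiteness of $\Phi$ and $J_i$ plus differentiability of $J_i$. It also yields smoothness of $\Phi$ for free, since every block gradient of $\Phi$ coincides with the smooth $\nabla_{\theta_{\mathcal{D}_i}}J_i$.

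One small correction: finiteness of $\Phi$ is not implied by $J_i$ being well defined. Adding any function of $s_0$ alone to $\tilde\phi$ preserves the hypothesis but can make $\Phi$ infinite under the Gaussian AR market state. So state it as an implicit assumption of the theorem, which the paper also needs for $\Phi$ to be defined, rather than deriving it from $J_i$.
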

\begin{proof} 
    Throughout this proof, we use the shorthand $s_{t+1} = f(s_t,a_t,\omega_t)$ and $a_{j,t}=\tilde{\mu}_j(s_{0,t},s_{j,t}|\theta_j)$ for all $j\in\mathcal{D}$ and $t\in\mathbb{Z}_+$. For all $i\in\mathcal{N}$, $j\in\mathcal{D}_i$, and $t\in\mathbb{Z}_+$, we have
    \begin{subequations} \label{eq:stagewise-equivalence}
    \begin{align}
        \nabla_{\theta_j}\tilde{R}_i(s_{0,t},a_t) & = \frac{\partial s_{0,t}}{\partial \theta_j}^\top\nabla_{s_0}\tilde{R}_i\left(s_{0,t},a_t\right) \\
        & + \frac{\partial a_t}{\partial \theta_j}^\top\nabla_a\tilde{R}_i\left(s_{0,t},a_t\right) \nonumber \\
        & = \frac{\partial}{\partial a_j}\tilde{R}_i\left(s_{0,t},a_t\right)\nabla_{\theta_j}a_{j,t} \label{eq:independence-step-one} \\
        & = \frac{\partial}{\partial a_j}\tilde{\phi}\left(s_{0,t},a_t\right)\nabla_{\theta_j}a_{j,t} \label{eq:stagewise-potential-step} \\
        & = \nabla_{\theta_j}\tilde{\phi}\left(s_{0,t},a_t\right). \label{eq:independence-step-two}
    \end{align}
    \end{subequations}
    The step \eqref{eq:independence-step-one} uses the fact that the trajectories $s_{0,0:\infty}$ and $a_{k,0:\infty}$, $k \neq j$ evolve independently of $\theta_j$, which is true by construction of \eqref{eq:public-transition-function}, \eqref{eq:private-transition-function}, and \eqref{eq:policies}. \eqref{eq:independence-step-two} re-applies this fact in the reverse direction. Step \eqref{eq:stagewise-potential-step} applies \eqref{eq:general-stagewise-potential-function}. \eqref{eq:stagewise-equivalence} implies $\nabla_{\theta_j}J_i\left(\theta\right) = \mathbb{E}\left[\sum_{t=0}^\infty\gamma^t\nabla_{\theta_j}\tilde{R}_i\left(s_{0,t},a_t\right)\right] = \mathbb{E}\left[\sum_{t=0}^\infty\gamma^t\nabla_{\theta_j}\tilde{\phi}\left(s_{0,t},a_t\right)\right] = \nabla_{\theta_j}\Phi\left(\theta\right)$, where the expectation is taken with respect to $s_0 \sim P_0$ and $\omega_t \sim P_\omega$. Since agent $i$'s parameters are $\theta_{\mathcal{D}_i}$, this implies $\nabla_{\theta_{\mathcal{D}_i}}J_i(\theta)=\nabla_{\theta_{\mathcal{D}_i}}\Phi(\theta)$ for all $i\in\mathcal{N}$, satisfying Definition \ref{def:mpg} with $\phi$ as in \eqref{eq:stagewise-potential}.
\end{proof}

Intuitively, Theorem \ref{thm:potential-structure} says that the game as formulated in Section \ref{sec:formulation} has the \textit{structural} properties necessary to be an MPG and that the existence of a \textit{stagewise} potential function $\tilde{\phi}$ is the determining factor.

\subsection{Cost functions}
In practice, ISOs construct the cost function by solving the \textit{economic dispatch problem}
\begin{subequations} \label{eq:economic-dispatch}
\begin{align}
    C^*(d) = \min_p \quad & \sum_{i\in\mathcal{G}} c_ip_i \\
    \text{s.t.} & \sum_{i\in\mathcal{G}} p_i \geq d \label{eq:power_balance} \\
    & 0 \leq p_i \leq \overline{p}_i & \forall i \in \mathcal{G}, \label{eq:gen_capacity}
\end{align}
\end{subequations}
where $\mathcal{G}$ is the set of generators and generator $i\in\mathcal{G}$'s bid is its fuel cost $c_i\geq0$ and its maximum power output $\overline{p}_i\geq0$. As mentioned in Section \ref{sec:formulation}, both grid-scale and distributed VRE generation is included as a negative contribution to the argument $d$. This is equivalent to representing VRE as a generator with zero cost and a time-varying capacity, since zero-cost generators will always be dispatched first. The inequality constraint in \eqref{eq:power_balance} allows for negative net demand (excess VRE) to be handled at zero cost through curtailment; \eqref{eq:power_balance} is binding at optimality when $d > 0$.

\eqref{eq:economic-dispatch} admits a closed-form solution:
\begin{equation} \label{eq:closed-form-cost}
        C^*\left(d\right) = \max\left\{0,\max_{i\in\mathcal{G}}\left\{C_i(d)\right\}\right\},
\end{equation}
where
\begin{equation*}
    C_i(d) = \sum_{j:c_j \leq c_i}c_j\overline{p}_j + c_i\left(d - \sum_{j:c_j \leq c_i} \overline{p}_j\right).
\end{equation*}
The solution \eqref{eq:closed-form-cost} recovers the merit-order dispatch, that is, generators are dispatched in order of increasing bids (fuel costs) to meet the net demand.

Unfortunately, directly setting $C=C^*$ would violate Assumption \ref{assumption:smoothness} because $C^*$ is not smooth. In particular, the resulting PSG is not per-sample smooth because of the discontinuous price function ${C^{*}}'$, precluding the application of first-order equilibrium-seeking methods like \eqref{eq:sgp}. We propose two smooth proxy cost functions to facilitate first-order analysis:
\begin{align}
    C_\text{LSE}(d) & \equiv \text{LSE}_\alpha^0\left(\left(C_i(d)\right)_{i\in\mathcal{G}}\right) \label{eq:lse-cost} \\
    C_{\text{Q}}(d) & \equiv \nu_2 d^2 + \nu_1 d + \nu_0, \label{eq:quad-cost}
\end{align}
where $\alpha > 0$, $\nu_2,\nu_1 \geq 0$, $\nu_0\in\mathbb{R}$ are parameters. $C_\text{LSE}$ provides an arbitrarily tight approximation of $C^*$ as $\alpha\to0$, with larger $\alpha$ offering a more well-conditioned gradient signal at the cost of modeling fidelity. Unfortunately, $C_\text{LSE}$ does not induce an MPG in general and therefore admits no convergence guarantees under \eqref{eq:sgp}. $C_\text{Q}$, on the other hand, offers only a characteristic similarity to $C^*$: it is convex and non-decreasing over positive arguments. Unlike $C_\text{LSE}$, however, $C_\text{Q}$ induces a potential game by Theorem \ref{thm:potential}.
\begin{theorem} \label{thm:potential}
    With $C=C_\text{Q}$, $(S, \Pi)$ is an MPG with $\phi$ given by \eqref{eq:stagewise-potential} and $\tilde{\phi}\left(s_0, a\right) \equiv -C_\text{Q}\left(m(s_0) + \sum_{i\in\mathcal{D}}a_i\right) - \nu_2\sum_{i\in\mathcal{N}}\left(\sum_{j\in\mathcal{D}_i}a_j\right)^2$.
\end{theorem}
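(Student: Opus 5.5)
The plan is to invoke Theorem \ref{thm:potential-structure}: it suffices to verify the stagewise condition \eqref{eq:general-stagewise-potential-function} for the proposed $\tilde{\phi}$. Once that holds, Theorem \ref{thm:potential-structure} immediately gives that $(S,\Pi)$ is an MPG with $\phi$ as in \eqref{eq:stagewise-potential}. Smoothness is not an issue: $C_\text{Q}$ is a polynomial, so under Assumption \ref{assumption:smoothness} on the remaining primitives both $\tilde{R}_i$ and $\tilde{\phi}$ are smooth.

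Fix $i\in\mathcal{N}$ and $j\in\mathcal{D}_i$. Introduce the shorthand $D \equiv m(s_0)+\sum_{k\in\mathcal{D}}a_k$ for the total demand and $q_i\equiv\sum_{k\in\mathcal{D}_i}a_k$ for agent $i$'s aggregate consumption. Then $C_\text{Q}'(D)=2\nu_2 D+\nu_1$. From \eqref{eq:payout}, $\tilde{R}_i = -(2\nu_2 D+\nu_1)q_i$. Since $\partial D/\partial a_j=1$ and $\partial q_i/\partial a_j=1$, the product rule gives
\begin{equation*}
    \frac{\partial}{\partial a_j}\tilde{R}_i(s_0,a) = -2\nu_2 q_i - (2\nu_2 D+\nu_1).
\end{equation*}

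For the candidate potential, differentiate term by term. The first term gives $-C_\text{Q}'(D) = -(2\nu_2D+\nu_1)$. In the second term, only the summand indexed by the agent owning $j$, namely agent $i$, depends on $a_j$, because the device sets $\mathcal{D}_k$ are disjoint; it contributes $-2\nu_2 q_i$. The two expressions therefore coincide for every $i$ and $j\in\mathcal{D}_i$, which is exactly \eqref{eq:general-stagewise-potential-function}.

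There is no real obstacle. The only point needing care is the second term of $\tilde{\phi}$: it must reproduce the "own-price-impact" contribution $-C_\text{Q}''(D)\,q_i=-2\nu_2 q_i$ coming from the product rule. This works only because $C_\text{Q}''$ is constant, which is also precisely why the same construction fails for $C_\text{LSE}$.
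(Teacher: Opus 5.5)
Your proposal is correct and follows essentially the same route as the paper. Both verify the stagewise condition \eqref{eq:general-stagewise-potential-function} by direct differentiation, showing that both sides equal $-C_\text{Q}'\left(m(s_0)+\sum_{k\in\mathcal{D}}a_k\right)-2\nu_2\sum_{k\in\mathcal{D}_i}a_k$, and then invoke Theorem \ref{thm:potential-structure}; you also spell out the product-rule and disjoint-device-set steps that the paper leaves as ``one can verify.''
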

\begin{proof}
    One can verify that
    \begin{equation*}
        \begin{aligned}
        \frac{\partial}{\partial a_j}\tilde{\phi}\left(s_0,a\right) & = \frac{\partial}{\partial a_j}\tilde{R}_i\left(s_0,a\right) \\
        & = -C_\text{Q}'\left(m(s_0)+\sum_{k\in\mathcal{D}}a_k\right) - 2\nu_2\sum_{k\in\mathcal{D}_i}a_k
        \end{aligned}
    \end{equation*}
    for all $i\in\mathcal{N}$ and $j\in\mathcal{D}_i$, so \eqref{eq:general-stagewise-potential-function} holds. Applying Theorem \ref{thm:potential-structure} completes the proof.
\end{proof}

Together, Theorems \ref{thm:potential-structure} and \ref{thm:potential} imply that any unstable dynamics under the gradient flow \eqref{eq:gradient-flow} exhibited when $C=C_\text{LSE}$ are attributable to the nonlinearity of $C_\text{LSE}'$. To determine whether this nonlinearity is severe enough to prevent convergence in practice, we compare the convergence behavior of \eqref{eq:sgp} between $C=C_\text{LSE}$ and $C=C_\text{Q}$ on a benchmark market in Section \ref{sec:case-study}.

\section{CASE STUDY} \label{sec:case-study}
In this section, we implement the model from Section \ref{sec:formulation} using the IEEE RTS-GMLC test case, a modern benchmark transmission system with high penetrations of VRE and a year's worth of five-minute-resolution timeseries data for both demand and VRE output \cite{barrows2019ieee}. We run \eqref{eq:sgp} for the PSG $(S,\Pi)$ with both $C=C_\text{LSE}$ and $C=C_\text{Q}$ across ownership structures ranging from competitive to monopolistic. For reproducibility, we note that time is measured in hours, power in gigawatts, energy in gigawatt-hours, and cost in thousands of US dollars.

\subsection{Baseline demand model} \label{sec:baseline-demand}
As detailed in Section \ref{sec:formulation}, net demand includes positive contributions from consumer load as well as negative contributions from VRE. Net demand exhibits coupled annual, weekly, and daily (including diurnal solar) cycles and is characterized by shocks, often correlated with weather, that persist for multiple timesteps before dying out. As such, we propose a nested Fourier model with autoregressive noise for the net demand:
\begin{equation} \label{eq:fourier-model}
    m(s_0) = F_{M_0}^3\left(v_\text{year},s_{0,\text{year}},s_{0,\text{week}},s_{0,\text{day}}\right) + s_{0,\text{AR}},
\end{equation}
where $F_{M}^3:\mathbb{R}^{(2M+1)^3}\times\mathbb{R}\times\mathbb{R}\times{\mathbb{R}}\to\mathbb{R}$ is the order-$M$ depth-3 nested Fourier model
\begin{align*}
    F_M^3\left(v_3,s_1,s_2,s_3\right) & \equiv F_M\left(v_1,s_1\right) \\
    v_{1,i} & = F_M\left(v_{2,i},s_2\right) & \forall i \in -M, \dots, M \\
    v_{2,ij} & = F_M\left(v_{3,ij},s_3\right) & \begin{aligned}
    \forall i \in -M, \dots, M \\
    \forall j \in -M, \dots, M
    \end{aligned},
\end{align*}
and $F_M:\mathbb{R}^{2M+1}\times\mathbb{R}\to\mathbb{R}$, in turn, is the order-$M$ depth-1 Fourier model
\begin{equation*}
    F_M(v,x) \equiv v_0 + \sum_{i=1}^M\left(v_i\cos\left(2\pi i x\right) + v_{-i}\sin\left(2\pi i x\right)\right).
\end{equation*}
The fractional parts of the public market state components $s_{0,\text{year}}$, $s_{0,\text{week}}$, and $s_{0,\text{day}}$ denote the current progress through their respective seasonality, e.g., $s_{0, \text{week}}=3.7$ indicates $70\%$ progress through the week beginning at midnight on Monday. The component $s_{0, \text{AR}}$ accumulates autoregressive noise. Accordingly, the state dynamics are
\begin{equation} \label{eq:public-state-dynamics}
\begin{aligned}
    \tilde{f}_0\left(s_0,\omega\right) = \begin{bmatrix}
        \tilde{f}_{0,\text{year}}\left(s_0,\omega\right) \\
        \tilde{f}_{0,\text{week}}\left(s_0,\omega\right) \\
        \tilde{f}_{0,\text{day}}\left(s_0,\omega\right) \\
        \tilde{f}_{0,\text{AR}}\left(s_0,\omega\right)
    \end{bmatrix} = \begin{bmatrix}
        s_{0,\text{year}} + \frac{\Delta t}{T_\text{year}} \\
        s_{0,\text{week}} + \frac{\Delta t}{T_\text{week}} \\
        s_{0,\text{day}} + \frac{\Delta t}{T_\text{day}} \\
        r s_{0,\text{AR}} + \omega
    \end{bmatrix},
\end{aligned}
\end{equation}
where $T_\text{year}$, $T_\text{week}$, and $T_\text{day}$ are the lengths of their respective seasonalities, $r\in(0,1)$ is the autoregressive parameter, and $\tilde{f}_{0,\bullet}$ indicates the component of $\tilde{f}_{0}$ associated with $s_{0,\bullet}$. The transition noise is drawn from a normal distribution, $P_\omega = N(0,\sigma^2)$.

We fit the model \eqref{eq:fourier-model} to the net demand profile constructed by aggregating the load and VRE generation profiles from RTS-GMLC. The data features three distinct consumer demand profiles, one for each network area, as well as separate generation curves for each grid-scale PV, wind, impoundment hydropower, run-of-river hydropower, rooftop PV, and concentrated solar power generator.

We omit the single concentrated solar power plant. For simplicity, we assume that all impoundment hydropower plants immediately convert all inflow to power output, as opposed to storing it for later generation. The full output capacity of the single nuclear plant is considered a constant negative contribution to the net demand, since nuclear plants typically run at full capacity year-round.

In the context of the model \eqref{eq:fourier-model}, the RTS-GMLC net load profile provides the trajectory $\tau = (s_{0,\text{year},t}, s_{0,\text{week},t}, s_{0,\text{day},t},m(s_{0,t}))_{t=1}^{N_\text{data}}$. Using $M_0=5$, we fit the parameters $r$, $\sigma$, and $v_\text{year}$ to the trajectory $\tau$ using Cochrane-Orcutt estimation \cite{cochrane1949application}, which requires solving a series of least-squares problems. While one could find a (locally optimal) maximum likelihood parameterization of \eqref{eq:fourier-model} by repeating the Cochrane-Orcutt procedure until convergence, we find that a single iteration is sufficient to produce synthetic data that is nearly indistinguishable from $\tau$. Figure \ref{fig:demand} shows a single simulated demand profile generated from \eqref{eq:fourier-model}--\eqref{eq:public-state-dynamics} alongside the RTS-GMLC training data at both annual and monthly resolutions.

\begin{figure}
    \centering
    \includegraphics[width=\linewidth]{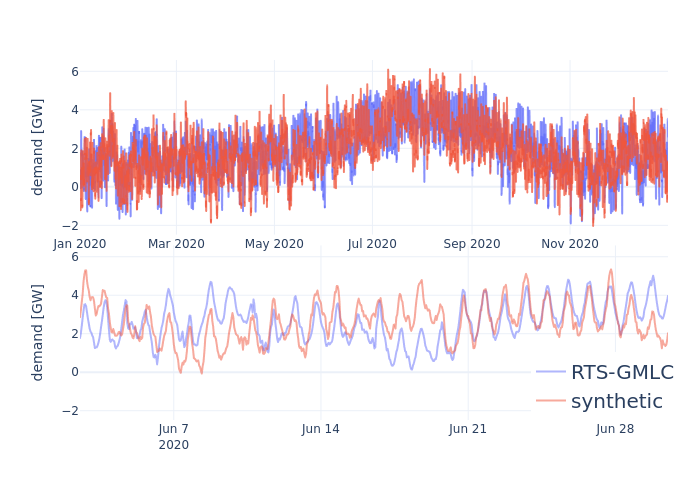}
    \caption{Synthetic load profile trained on the RTS-GMLC data, with the original data for reference.}
    \label{fig:demand}
\end{figure}

Notice that the aggregate net demand profile in Figure \ref{fig:demand} frequently drops below zero in the winter months due to excess VRE capacity. Section \ref{sec:supply-curve} details how this behavior results in large price swings with high arbitrage potential.

\subsection{Generator supply curve} \label{sec:supply-curve}
The RTS-GMLC data includes detailed models of coal, oil, and natural gas plants. Each fossil generator is specified by a fuel price, heat rate curve, and minimum/maximum generation limits. Since we neglect unit commitment, we replace all minimum generation limits with zero. We take the weighted average of the heat rate curves over the full generator range (from zero to its maximum capacity) and multiply by the fuel price to obtain a single linear cost coefficient. The resulting maximum output and cost coefficients are used to construct the expressions of $C_i$ for $i\in\mathcal{G}$.

We use smoothing parameter $\alpha=1$ in the expression of $C_\text{LSE}$. We choose the parameters $\nu$ in the expression of $C_\text{Q}$ by fitting to $C^*$ over the aggregated RTS-GMLC net demand trajectory $\tau$ using least-squares:
\begin{equation*}
    \nu = \arg\min_\nu \sum_{t=1}^{N_\text{data}}\left(C_\text{Q}\left(m\left(s_{0,t}\right)\right) - C^*\left(m\left(s_{0,t}\right)\right)\right)^2.
\end{equation*}

Figure \ref{fig:price} shows the resulting price functions ${C^*}'$, $C'_\text{LSE}$, and $C'_\text{Q}$ over the range of the RTS-GMLC net demand profile. Notice that the exact and LSE price functions are characterized by a steep jump between the negative-net-demand region, where VRE is curtailed and the price is zero, and the positive-net-demand region, where prices gradually increase from around \$20k/GWh to around \$35k/GWh. Recall from Section \ref{sec:baseline-demand} that the net demand profile frequently moves across this boundary. As such, the jump at zero demand is both an arbitrage opportunity and the dominant source of nonlinearity that might induce instability in the gradient flow dynamics \eqref{eq:gradient-flow} when $C=C_\text{LSE}$.

\begin{figure}
    \centering
    \begin{subfigure}[b]{0.23\textwidth}
        \includegraphics[width=\textwidth]{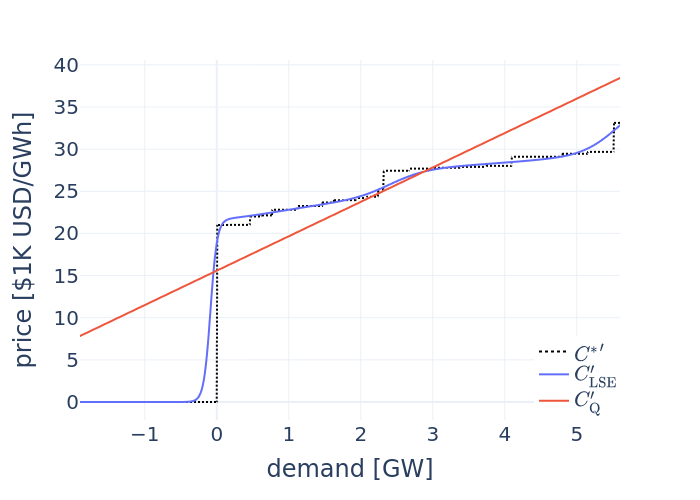}
        \caption{The LSE and quadratic price function proxies.}
        \label{fig:price}
    \end{subfigure}
    \begin{subfigure}[b]{0.23\textwidth}
        \includegraphics[width=\textwidth]{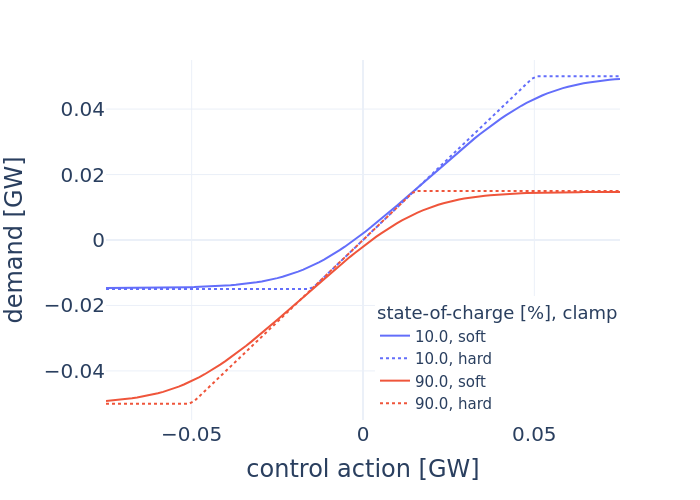}
        \caption{Actuator function at 10\% and 90\% state-of-charge.}
        \label{fig:action-map}
    \end{subfigure}
    
    \caption{Price and storage actuator functions.}
    \label{fig:smoothing-parameters}
\end{figure}

\subsection{Storage model}
The framework proposed in Section \ref{sec:formulation} generally supports grid-scale storage, load aggregators, or any other flexible self-scheduling market participants. Since the RTS-GMLC data includes storage but has no native distributed energy resources or aggregations thereof, we only consider grid-scale battery storage in this case study.

The state of battery $i\in\mathcal{D}$ is a scalar denoting its current state-of-charge, that is $\mathcal{S}_i\subset\mathbb{R}$ for all $i\in\mathcal{D}$. Battery $i$'s action space map is
\begin{equation} \label{eq:action-space}
    \tilde{\mathcal{A}}_i(s) = \left\{a_i: \begin{aligned}-\overline{a}_i & \leq a_i \leq \overline{a}_i \\ -s & \leq \Delta t a_i \leq \overline{s}_i - s\end{aligned}\right\},
\end{equation}
where $\Delta t$ is the timestep length, $\overline{a}_i$ is the maximum charge/discharge rate of battery $i$, and $\overline{s}_i$ is its maximum state-of-charge. Its state dynamics are
\begin{equation} \label{eq:storage-dynamics}
    \tilde{f}_i\left(s_0,s_i,a_i,\omega\right) = s_i + \Delta t a_i.
\end{equation}
The second constraint in \eqref{eq:action-space} implicitly enforces $0 \leq s_{i,t} \leq \overline{s}_i$ for all $t\in\mathbb{Z}_+$. For simplicity, \eqref{eq:storage-dynamics} assumes $100\%$ round-trip efficiency. Battery $i$'s policy consists of two layers: the parameterized layer $g_i:\mathcal{S}_0\times\mathcal{S}_i \to \mathbb{R}$ that maps the current state to an unconstrained control action and an actuator layer $h_i: \mathbb{R} \to \tilde{\mathcal{A}}_i^\cup$ that maps the control action to the feasible action space $\tilde{\mathcal{A}}_i(s_i)$. We design the actuator layer as
\begin{equation*}
    \begin{aligned}
    h_i\left(a,s\right) = a & - \text{LSE}_\beta^0\left(a + \text{LSE}_\beta\left(-\overline{a}_i,\frac{s - \overline{s}_i}{\Delta t}\right)\right) \\
    & + \text{LSE}_\beta^0\left(\text{LSE}_\beta\left(-\overline{a}_i,-\frac{s}{\Delta t}\right) - a\right).
    \end{aligned}
\end{equation*}
$h_i$ is designed as a smooth approximation to a clamping function that preserves a locally linear region around $a = 0$. We use smoothing parameter $\beta=0.01$. Figure \ref{fig:action-map} shows $h_i$ for one of the RTS-GMLC batteries (since the batteries are identical, all $h_i$ are identical as well), evaluated at $s_i = 0.1\overline{s}_i$ and $s_i = 0.9\overline{s}_i$. The approximated hard-clamp function is also shown as a dashed line for reference.

We propose the parameterized layer
\begin{equation}  \label{eq:parameterized-layer}
    \begin{aligned}
    g_i\left(s_0,s_i|\theta_i\right) & = F_{M_i}^3\left(\theta_F,s_{0,\text{year}},s_{0,\text{week}},s_{0,\text{day}}\right) \\
    & + \theta_{\text{AR}}s_{0,\text{AR}} + \theta_{\text{SOC}}s_i.
    \end{aligned}
\end{equation}

Intuitively, the parameterized layer of device $i$'s policy is a simple linear function of the current residual, device $i$'s state-of-charge, and the Fourier basis introduced in \eqref{eq:fourier-model} up to order $M_i$. Inclusion of the Fourier basis as a feature allows the linear policy to construct an internal representation of the baseline demand. $g_i$ allows for basic arbitrage and withholding behavior if agents learn weights that modulate this baseline demand model; see Section \ref{sec:results}/Figure \ref{fig:rollout} for an example. While deep neural networks can encode more granular policies, the linear parameterization $g_i$ is easier to train, more interpretable, and sufficiently expressive for the purposes of studying the fundamental dynamics of the market.

The full policy for device $i\in\mathcal{D}$ is
\begin{equation*}
    \tilde{\mu}_i\left(s_0,s_i|\theta_i\right) = h_i\left(g_i\left(s_0,s_i|\theta_i\right),s_i\right).
\end{equation*}

RTS-GMLC features a single grid-scale battery storage unit with a three-hour duration and a maximum net load of 50 MW. To represent a market with a much larger storage penetration, we include 48 copies of this battery for a total installed capacity of 2.4 GW (7.2 GWh), approximating the typical amplitude of the baseline demand; see Figure \ref{fig:demand}. We set $M_i = 2$ in the parameterized layer \eqref{eq:parameterized-layer}, giving a total of 128 policy parameters per device.

\subsection{Training procedure and hyperparameters} \label{sec:hyperparameters}
For each variant $C=C_\text{LSE}$ and $C=C_\text{Q}$, we begin by training a ``social planner'' policy by initializing $\theta = 0$ and running \eqref{eq:sgp} (implemented with Adam \cite{kingma2014adam} throughout for reliable training) until convergence on the modified game with $R_i = -C$. Intuitively, the social planner policies maximize social welfare by minimizing the total (discounted) fuel cost required to serve the baseline load profile. \eqref{eq:sgp} in this context is equivalent to stochastic gradient ascent on a single payoff function (cooperative games are a special case of potential games). We denote the social-planner return function analogous to \eqref{eq:expectea_return} as $J_0$.

Recall that the storage fleet consists of 48 identical batteries. We consider ten symmetric ownership structures associated with the ten factors of 48: a single agent owning all 48 batteries, two agents owning 24 each, and so on up to 48 agents owning one battery each. We refer to the fraction of the storage fleet owned by each agent, $1/|\mathcal{N}|$, as its \textit{market share}; the ten structures thus correspond to per-agent market shares of 100\%, 50\%, 33.3\%, and so on down to 2.1\%, as labeled in Figures \ref{fig:training}--\ref{fig:returns}. Ownership enters the model only through the payoff \eqref{eq:payout}, which compensates each agent for the aggregate output of its fleet. Since the physical fleet is identical across structures, differences in equilibrium outcomes are attributable entirely to market power. For each ownership structure, we initialize the agent policies to the trained social planner policy and run \eqref{eq:sgp}. Since both the initialization and the update rule \eqref{eq:sgp} are symmetric across agents, all agents' policies remain identical throughout training. Consequently, $J_i(\theta^{(k)})=J_j(\theta^{(k)})$ for all $i,j\in\mathcal{N}$ at every iteration $k$, and the total storage revenue $\sum_{i\in\mathcal{N}}J_i(\theta)$ reduces to $|\mathcal{N}|$ times any single agent's return.

We use a timestep $\Delta t=1$ and downsample the RTS-GMLC profiles accordingly before fitting the baseline demand model. We choose $\gamma=0.999$, giving an effective horizon of 1000 hours (about 42 days). For hardware efficiency, we truncate rollouts at $T = 5000$ ($0.999^{5000} < 0.01$) rather than sampling $T$ geometrically. We train social planner policies for $1000$ iterations and general-sum policies for $500$ iterations each. We use batch size $B = 1000$ and learning rate $\eta^{(k)} = 0.001$. The initial state distribution $P_0$ is a uniform distribution over the set $\{s:s_{0,\text{AR}}=0,0 \leq s_{0,\text{year}},s_{0,\text{week}},s_{0,\text{day}} \leq 1, s_i=\frac{\overline{s}_i}{2}\quad\forall i \in \mathcal{D}\}$.

\subsection{Results} \label{sec:results}
Policies exhibited stable convergence under \eqref{eq:sgp} for all ten ownership configurations with $C=C_\text{Q}$ as well as both the social planner configurations. Since convergence is guaranteed for these potential-game configurations, their training curves are not shown here. The top subplot of Figure \ref{fig:training} visualizes the first 300 iterations of \eqref{eq:sgp} for the ownership configurations where convergence is not guaranteed, that is, when $C=C_\text{LSE}$. ``Total storage revenue'' on the vertical axis refers to $\sum_{i\in\mathcal{N}}J_i(\theta)$ (recall from Section \ref{sec:hyperparameters} that this equals $|\mathcal{N}|$ times the per-agent return by symmetry). The bottom subplot shows the norms of the parameter steps $\|\theta^{(k+1)} - \theta^{(k)}\|$. The training trajectories are consistent with convergence to a neighborhood of an LASE: agents take larger parameter steps that increase their returns as they approach the LASE, then take smaller, noisier steps within the neighborhood as the expected gradient vanishes.

\begin{figure}
    \centering
    \includegraphics[width=1.0\linewidth]{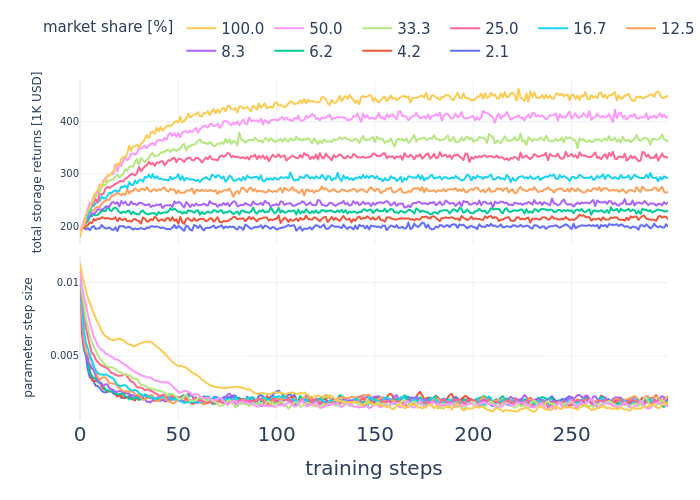}
    \caption{LSE training curves.}
    \label{fig:training}
\end{figure}

\begin{figure}
    \centering
    \includegraphics[width=1.0\linewidth]{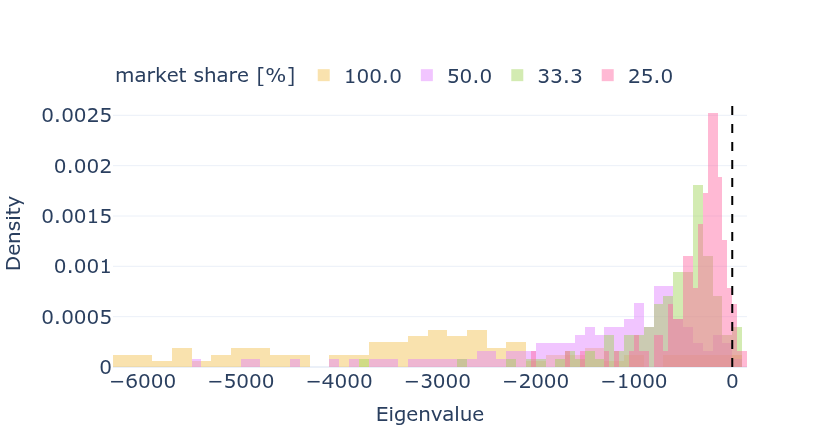}
    \caption{Empirical Hessian eigenvalue distribution for the trained LSE policies.}
    \label{fig:LSE_eigval}
\end{figure}

To determine whether the LASEs found for the $C=C_\text{LSE}$ configurations are differential Nash equilibria, we check the eigenvalues of the empirical single-agent Hessians $\nabla_{\theta_i}^2 J_i(\theta^*)$ averaged over 100 rollouts. The eigenvalue distributions of the four configurations with the highest market shares are shown in Figure \ref{fig:LSE_eigval}. For high-market-share configurations, the eigenvalue distributions generally indicate negative definiteness. While there are a small number of slightly positive eigenvalues, these also appear in the distributions for the $C=C_\text{Q}$ configuration, even though all LASEs are known to be differential Nash equilibria for potential games. The positive eigenvalues, therefore, are likely the result of noise. The distributions for lower-market-share configurations (not shown due to space constraints) were less conclusive, with substantial portions of the distribution in the positive region. Oddly, this was true when $C=C_\text{Q}$ as well, so it is possible that the return landscape for these configurations is simply not amenable to low-variance Hessian estimates. Therefore, we conclude that the policies in high-market-share cases appear to be differential Nash equilibria, but are unable to make the same claim for the low-market-share cases. We note that the behavior of the equilibria is otherwise consistent across market shares in all other analyses conducted here.

Figure \ref{fig:returns} evaluates the trained policies for all configurations for both $C=C_\text{LSE}$ and $C=C_\text{Q}$ using two metrics: the total storage revenue $\sum_{i\in\mathcal{N}}J_i(\theta^*)$ and the ``arbitrage surplus'' $J_0(\theta^*) - J_0(0)$, each averaged across 1000 rollouts. The storage revenue is the same quantity from the vertical axis of Figure \ref{fig:training}; this is (a scaled version of) the symmetric agent-by-agent returns on which the policies were trained. The arbitrage surplus is the negative discounted total fuel cost $J_0(\theta^*)$ shifted by its baseline value (when batteries idle at each timestep) $J_0(0)$ to capture the added social value associated with each policy. The social planner policies are also evaluated according to each metric and are indicated by faded horizontal lines.

\begin{figure}
    \centering
    \includegraphics[width=1.0\linewidth]{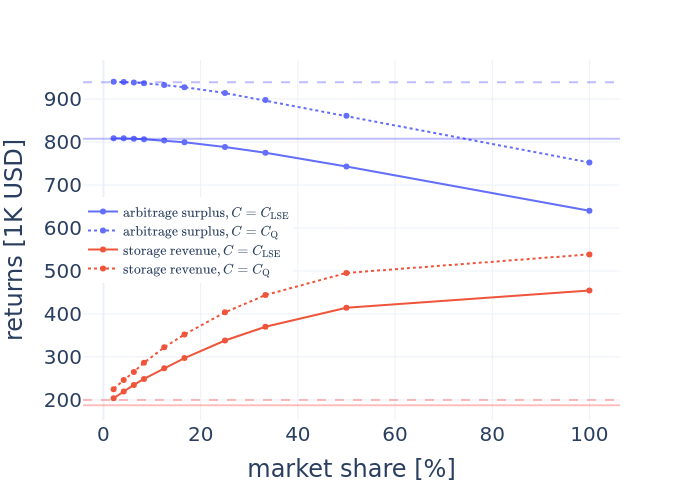}
    \caption{Arbitrage surplus (blue) and storage revenue (red) of the trained equilibrium policies versus per-agent market share, for $C=C_\text{LSE}$ (solid) and $C=C_\text{Q}$ (dotted). Faded horizontal lines mark the social-planner value of each metric.}
    \label{fig:returns}
\end{figure}

Figure \ref{fig:returns} features an interpretable trend: the monopolist maximizes storage profits at the expense of social arbitrage surplus, while the social planner maximizes arbitrage surplus at the expense of storage revenue. As management of the batteries is distributed between more agents, they each individually see less opportunity to influence the price, and both metrics approach their social-planner values (the faded lines) as market share decreases. In this context, the social planner admits an interpretation as a perfectly competitive market with an infinitely small market share per agent. The trends in Figure \ref{fig:returns} are nearly identical for $C=C_\text{LSE}$ and $C=C_\text{Q}$.

Figure \ref{fig:rollout} demonstrates a selected week from a randomly seeded year-long rollout of the social-planner and monopolist (100\% market share) policies for $C=C_\text{LSE}$. The three subplots show demand, price, and battery state-of-charge, respectively. The dotted line is the baseline case where batteries idle at all timesteps, shown for reference. The social planner dispatches the batteries aggressively to flatten out the net demand curve. The monopolist engages in qualitatively similar behavior but withholds some flexibility by charging and discharging at lower rates, thereby maintaining a higher price variance. This is a better outcome for the storage operators, but results in a higher total fuel cost (lower economic efficiency). The trend in Figure \ref{fig:returns} suggests that the equilibrium policies found by \eqref{eq:sgp} gradually shift between the social-planner and monopolist behaviors exhibited in Figure \ref{fig:rollout} as per-agent market share increases. Rollouts of the intermediate configurations (not shown) are consistent with this hypothesis, exhibiting progressively lower charge and discharge rates as market share increases.

\begin{figure}
    \centering
    \includegraphics[width=1.0\linewidth]{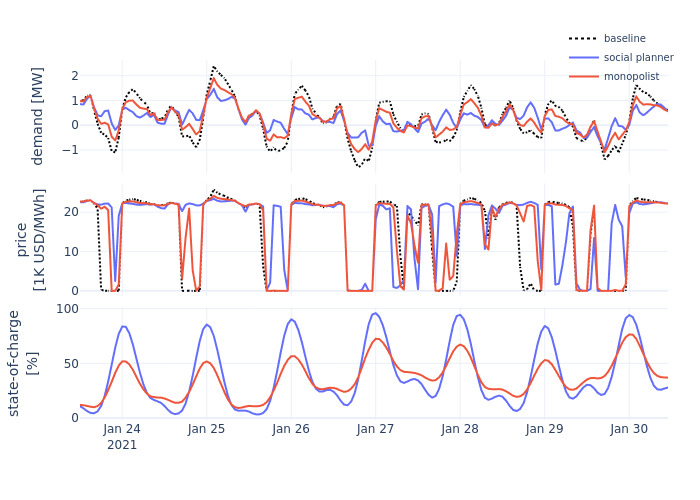}
    \caption{Rollout of the social-planner and monopolist LSE policies.}
    \label{fig:rollout}
\end{figure}

Overall, the update rule \eqref{eq:sgp} appears to exhibit almost identical performance for $C=C_\text{LSE}$ and $C=C_\text{Q}$. In particular, Figure \ref{fig:training} indicates that \eqref{eq:sgp} finds LASEs for $C=C_\text{LSE}$ across all per-agent market share configurations. Figure \ref{fig:LSE_eigval} suggests that these LASEs also happen to be Nash equilibria for at least the configurations with high per-agent market shares, indicating that the nonlinearities in $C'_\text{LSE}$ did not critically distort the vector field in \eqref{eq:gradient-flow} relative to the conservative field induced by $C_\text{Q}$.

\section{CONCLUSION}
We propose a generalized multi-agent policy optimization model for characterizing strategic equilibria in real-time electricity markets with large, arbitrage-based participants such as grid-scale storage and load aggregators. Under appropriate cost function proxies, the model admits natural gradient-based equilibrium-seeking update rules in the mold of \eqref{eq:gradient-flow}. We study two proxies: $C_\text{Q}$, which is stylized but induces a Markov potential game, and $C_\text{LSE}$, which is high-fidelity but admits no convergence guarantees. We find that the standard gradient play update rule \eqref{eq:sgp} converges to similar equilibria for both cost functions on a standard benchmark market, and that these equilibria are useful for characterizing arbitrage withholding and quantifying the associated economic inefficiencies.

Generalizing the model to incorporate transmission constraints is a natural extension of this work. This would preclude a closed-form solution in the form of \eqref{eq:closed-form-cost}; a smooth proxy analogous to \eqref{eq:lse-cost} can instead be obtained by replacing the hard constraints of the market-clearing optimal power flow problem with log barriers and computing prices via differentiable convex optimization \cite{agrawal2019differentiable}. Additionally, we informally assume that differential Nash equilibria under $C=C_\text{LSE}$ with small $\alpha$ approximate local Nash equilibria under $C=C^*$; formalizing this relationship is future work.

\bibliographystyle{ieeetr}
\bibliography{references}

\end{document}